\documentclass[runningheads]{llncs}
\usepackage[T1]{fontenc}
\usepackage{amsmath, amssymb}
\usepackage{mathtools}
\usepackage{comment}

\usepackage{tikz}
\usetikzlibrary{fit,backgrounds, arrows.meta, calc, positioning, matrix}

\usepackage{algorithm}
\usepackage{algpseudocode}
\usepackage{placeins} 
\usepackage{url}
\newcommand{\doi}[1]{\url{https://doi.org/#1}}

\providecommand{\qedsymbol}{\ensuremath{\square}}

\newcommand{\qedbox}{\hfill\qedsymbol}

\spnewtheorem{claimnum}[theorem]{Claim}{\bfseries}{\itshape}

\title{Strong Conflict-Free Vertex-Connection via Twin Cover: Kernelization and Chromatic Bounds}

\titlerunning{Strong Conflict-Free Vertex-Connection via Twin Cover}

\usepackage{bbding}

\author{Samuel German \Envelope}
\authorrunning{Samuel German}
\institute{University of California, San Diego, USA \\
\email{sgerman@ucsd.edu}}

\begin{document}
\maketitle

\begin{abstract}
A vertex-coloring of a connected graph $G$ is a strong conflict-free vertex-connection coloring if every two distinct vertices are joined by a shortest path on which some color appears exactly once. The minimum number of colors in such a coloring is the strong conflict-free vertex-connection number $\operatorname{svcfc}(G)$. We study this problem under the parameter twin cover.
  Let $X$ be a twin cover of $G$ of size $t$, and let $k$ be the target number of colors. In our first result, given $(G,k)$ together with a twin cover $X$, we reduce in polynomial time to an equivalent annotated instance on at most $\max\{2,t+(t+1)k2^{t+k-1}\}$ vertices. Hence the annotated version of Strong CFVC Number, in which a twin cover is supplied as part of the input, is fixed-parameter tractable parameterized by $t+k$. Using this bound, we then obtain a kernel parameterized by $\operatorname{tc}(G)+k$; in particular, for every fixed $k$, the problem is FPT parameterized by the twin-cover number alone.
  In our second result, we prove every connected graph $G$ with twin cover $X$ of size $t$ satisfies $\chi(G)\le \operatorname{svcfc}(G)\le \chi(G)+t$. More generally, if $Y\subseteq X$ intersects every shortest path of length at least $3$, then $\operatorname{svcfc}(G)\le \chi(G)+|Y|$. We also derive an exact expression for the chromatic number on graphs of bounded twin-cover number: for every proper coloring $\varphi$ of $G[X]$, the minimum number of colors needed to extend $\varphi$ to all of $G$ is $K_\varphi=\max_{S\subseteq X}(|\varphi(S)|+m(S))$, and hence $\chi(G)=\min_{\varphi\text{ proper on }G[X]} K_\varphi$. Our results provide the first evidence that twin cover is a useful parameter for strong conflict-free vertex-connection and show that once a twin cover is fixed, the remaining difficulty is concentrated in a bounded additive gap above the chromatic number.
\keywords{Conflict-free coloring \and strong conflict-free vertex-connection \and twin cover \and parameterized complexity \and kernelization \and graph coloring}
\end{abstract}

\section{Introduction}

Conflict-free coloring is a way of coloring in which certain combinatorial objects must contain a uniquely colored element. The notion originated in a geometric setting motivated by frequency assignment in cellular networks and has since developed into a substantial area in combinatorics and algorithms \cite{even2003conflictfree,pach2009conflictfree,smorodinsky2013conflictfree,chang2024survey}. In graph settings, conflict-free ideas have been adapted to neighborhoods, paths, and several connectivity notions, including conflict-free connection and strong conflict-free connectivity; see \cite{czap2018conflictfree,ji2020strong,chang2024survey}. One particularly natural direction is to combine proper vertex coloring with shortest-path connectivity constraints.

Recently, Hsieh et al. introduced the \emph{strong conflict-free vertex-connection} problem \cite{hsieh2025complexity}. A vertex-coloring of a connected graph $G$ is a strong conflict-free vertex-connection coloring if every two distinct vertices are linked by a shortest path on which some color appears exactly once. The minimum number of colors in such a coloring is denoted $\operatorname{svcfc}(G)$. This parameter blends ordinary proper coloring with a path-based uniqueness requirement. The initial complexity study already showed that the problem behaves quite differently from classical graph coloring: among other results, \textsc{Strong CFVC Number} is NP-hard already for the $3$-color case on highly restricted graphs, while certain structured graph classes still admit polynomial-time algorithms \cite{hsieh2025complexity}.

The parameterized complexity of the problem has only begun to be understood. A recent follow-up by Feghali, Le, and Le proved that \textsc{Strong CFVC Number} is fixed-parameter tractable when parameterized by the vertex-cover number, and also established that the $3$-color case admits no polynomial kernel under this parameter unless $\mathsf{NP}\subseteq\mathsf{coNP/poly}$ \cite{feghali2026parameterized}. This naturally suggests moving beyond vertex cover. A canonical candidate is \emph{twin cover}, introduced by Ganian \cite{ganian2015improving}: it strictly generalizes vertex cover, preserves a strong decomposition into twin-cliques outside the parameter set, and can remain small even on dense graphs. These features make twin cover especially attractive for strong CFVC, in which both proper coloring and shortest paths interact tightly with cliques of true twins. From the perspective of coloring, this choice is also natural: twin cover has already proved useful for kernelization in fixed-target graph coloring problems \cite{jansen2019optimal}.

In this paper, we take a first step in that direction. Our first main result is an explicit kernel bound in the annotated setting where a twin cover is supplied. More precisely, given a valid annotated instance $(G,k,X)$ with $X\subseteq V(G)$ a twin cover of size $t$, we compute in polynomial time an equivalent valid annotated instance $(G',k',X')$ such that
$|V(G')|\le \max\!\left\{2,\; t + (t+1)k\,2^{t+k-1}\right\}$.
Thus the annotated version of \textsc{Strong CFVC Number} is fixed-parameter tractable parameterized by $t+k$. We then convert this into an unannotated result parameterized by $\operatorname{tc}(G)+k$: using a twin cover of size at most $2\,\operatorname{tc}(G)$, we obtain a kernel on at most
\[
\max\!\left\{
2,\;
2\,\operatorname{tc}(G)
+
\bigl(2\,\operatorname{tc}(G)+1\bigr)k\,2^{\,2\operatorname{tc}(G)+k-1}
\right\}
\]
vertices. In particular, for every fixed constant $k$, \textsc{Strong CFVC Number} is fixed-parameter tractable parameterized by the twin-cover number alone.

We give a second structural contribution. We prove that every connected graph $G$ with twin cover $X$ of size $t$ satisfies $\chi(G)\leq \operatorname{svcfc}(G)\leq \chi(G)+t$.
More generally, if $Y\subseteq X$ intersects every shortest path of length at least $3$, then $\operatorname{svcfc}(G)\leq \chi(G)+|Y|$. Thus, once a twin cover is fixed, the possible values of $\operatorname{svcfc}(G)$ lie in a window of width at most $t$ above the ordinary chromatic number. This shifts the viewpoint from an arbitrary target $k$ to a bounded additive slack above $\chi(G)$.

To complement this, we give an exact expression for the chromatic number on graphs of bounded twin-cover number: for every proper coloring $\varphi$ of the twin cover $G[X]$, we show that the minimum number of colors needed to extend $\varphi$ to all of $G$ is
$K_\varphi=\max_{S\subseteq X}\bigl(|\varphi(S)|+m(S)\bigr)$,
where $m(S)$ is the maximum size of a twin-clique with neighborhood $S$ in $X$. Consequently,
$\chi(G)=\min_{\varphi \text{ proper on }G[X]} K_\varphi$.
This formula identifies the precise contribution of the twin-cover core and the outside twin-cliques to ordinary coloring.

Taken together, our results show that twin cover is a meaningful parameter for strong CFVC. Algorithmically, we obtain positive evidence that the problem remains tractable beyond the vertex-cover setting. Structurally, we isolate the remaining difficulty: after fixing a twin cover, computing $\operatorname{svcfc}(G)$ amounts to resolving a bounded gap above the chromatic number. In this sense, our results place strong CFVC closer to the broader body of coloring problems that are well behaved under twin-cover parameterization, while also highlighting the additional difficulty created by the shortest-path conflict-free requirement \cite{jansen2019optimal,feghali2026parameterized}. We view this as being useful evidence that twin cover is the right framework in which to study the next phase of the problem.

\section{Preliminaries and Definitions}

All graphs considered herein are finite, simple, undirected, and connected unless stated otherwise. 

\subsection{Graphs, paths, and colorings}

For a graph $G$, we write $V(G)$ and $E(G)$ for its vertex set and edge set, respectively. For a set $X \subseteq V(G)$, we write $G[X]$ for the subgraph of $G$ induced by $X$; we adittionally  write
$G-X := G[V(G)\setminus X]$.
For a vertex $v \in V(G)$, its open neighborhood is
$N_G(v) := \{u \in V(G) : uv \in E(G)\}$,
and its closed neighborhood is $N_G[v] := N_G(v) \cup \{v\}$. For a set $X \subseteq V(G)$, we write
$N_G(X) := \Bigl(\bigcup_{v \in X} N_G(v)\Bigr)\setminus X$.
We omit the subscript $G$ whenever the graph to which the neighborhood applies is clear from the context.

Herein all paths are simple paths. If $P=v_0v_1\cdots v_\ell$ is a path, then the \emph{length} of $P$ is $\ell$. For vertices $u,v \in V(G)$, the \emph{distance} between $u$ and $v$, denoted by $d_G(u,v)$, is the minimum length of a $u$-$v$ path in $G$. A $u$-$v$ path of length $d_G(u,v)$ is a \emph{shortest} $u$-$v$ path.

A \emph{clique} is a set of vertices for which any two vertices therein are adjacent.

A (vertex) \emph{coloring} of $G$ is a function $c \colon V(G) \to \mathbb{N}$. We say that $c$ is a \emph{$k$-coloring} if $c(V(G)) \subseteq [k]$. A coloring $c$ is \emph{proper} if $c(u) \neq c(v)$ for every edge $uv \in E(G)$. The \emph{chromatic number} of $G$, denoted by $\chi(G)$, is the minimum integer $k$ such that $G$ admits a proper $k$-coloring.

For a coloring $c$ and a vertex set $X \subseteq V(G)$, we write
$c(X) := \{c(v) : v \in X\}$.
If $C$ is a clique and $c$ is proper, then necessarily $|c(C)|=|C|$.

\subsection{Strong conflict-free vertex-connection}

Let $c$ be a coloring of $G$, and let $P=v_0v_1\cdots v_\ell$ be a path in $G$. We say that $P$ is \emph{conflict-free with respect to $c$} if there exists a color $\alpha$ such that $\alpha$ appears on exactly one vertex of $P$, that is,
\[
\bigl|\{i \in \{0,1,\dots,\ell\} : c(v_i)=\alpha\}\bigr| = 1
\]
for some color $\alpha$.

A coloring $c$ of $G$ is a \emph{strong conflict-free vertex-connection coloring}, or briefly a \emph{strong CFVC coloring}, if for every two distinct vertices $u,v \in V(G)$ there exists a conflict-free shortest $u$-$v$ path in $G$.

The \emph{strong conflict-free vertex-connection number} of $G$, denoted by $\operatorname{svcfc}(G)$, is the minimum number of colors in a strong CFVC coloring of $G$.

We consider the following decision problem.

\medskip
\noindent
\textsc{Strong CFVC Number}

\smallskip
\noindent
\emph{Input:} A connected graph $G$ and an integer $k \geq 1$.

\smallskip
\noindent
\emph{Question:} Is $\operatorname{svcfc}(G) \leq k$?

\medskip
\noindent
\textsc{Annotated Strong CFVC Number}

\smallskip
\noindent
\emph{Input:} A connected graph $G$, an integer $k\ge 1$, and a set $X\subseteq V(G)$.

\smallskip
\noindent
\emph{Parameter:} $k+|X|$.

\smallskip
\noindent
\emph{Promise:} $X$ is a twin cover of $G$.

\smallskip
\noindent
\emph{Question:} Is $\operatorname{svcfc}(G)\le k$?

\medskip

We note that the promise can be verified in polynomial time: it suffices to check,
for every edge $uv$ of $G-X$, that $N_G[u]=N_G[v]$.

A color that is not used elsewhere in the coloring under consideration will be called a \emph{fresh color}.

\subsection{Twins, twin cover, and twin-cliques}

Two distinct vertices $u,v \in V(G)$ are \emph{true twins} if
$N_G[u] = N_G[v]$.
Equivalently, $u$ and $v$ are adjacent and satisfy
$N_G(u)\setminus \{v\} = N_G(v)\setminus \{u\}$. An edge $uv \in E(G)$ is a \emph{twin edge} if its endpoints $u$ and $v$ are true twins. A set $X \subseteq V(G)$ is a \emph{twin cover} of $G$ if every edge of $G-X$ is a twin edge. The \emph{twin-cover number} of $G$, denoted by $\operatorname{tc}(G)$, is the minimum size of a twin cover of $G$. Throughout the paper, whenever a twin cover $X$ is fixed, we write
$t := |X|$. Let $X$ be a fixed twin cover of $G$. A connected component of $G-X$ will be called a \emph{twin-clique}. By the definition of twin cover, every twin-clique is a clique. This is an immediate consequence of the definition of twin cover. Indeed, if
$uv\in E(G-X)$, then neither endpoint lies in $X$, and therefore the edge $uv$
must be a twin edge; equivalently, $u$ and $v$ are true twins. Now let
$v_0v_1\cdots v_r$ be a path contained in one connected component of $G-X$.
Since each consecutive pair $v_i,v_{i+1}$ is joined by an edge of $G-X$, we have
$N_G[v_i]=N_G[v_{i+1}]$ for all $i$, and hence, by transitivity, all vertices
$v_0,\dots,v_r$ have the same closed neighborhood in $G$. In particular, they
have the same neighborhood in $X$, and they are pairwise adjacent. Thus every
connected component of $G-X$ is a clique, and all vertices of such a component
have the same neighborhood in $X$; compare also Observation~1 of
\cite{ganian2015improving}. Hence, if $C$ is a twin-clique, we may define
$N_X(C) := N_G(v)\cap X$ for any $v \in C$. For $S \subseteq X$ and $s \in \mathbb{N}$, we write
\[
\mathcal{C}_{S,s}
:=
\{\, C : C \text{ is a twin-clique of } G-X,\ N_X(C)=S,\ |C|=s \,\}.
\]
We refer to the pair $(S,s)$ as the \emph{type} of a twin-clique in $\mathcal{C}_{S,s}$. If $c$ is a coloring and $C$ is a twin-clique, we write
$\operatorname{col}_c(C) := c(C)$
for the set of colors used on $C$. For each $S \subseteq X$, we also define
\[
m(S)
:=
\max\{\, |C| : C \text{ is a twin-clique of } G-X \text{ with } N_X(C)=S \,\},
\]
with the convention that $m(S)=0$ if no such twin-clique exists. If $\varphi$ is a proper coloring of $G[X]$, we write
$\varphi(S) := \{\varphi(x) : x \in S\}$
for the set of colors appearing on $S$, and we define
$K_{\varphi}
:=
\max_{S \subseteq X} \bigl(|\varphi(S)| + m(S)\bigr)$.
The quantity $K_{\varphi}$ will be used later in our treatment of proper colorings of graphs of bounded twin-cover number.

\subsection{Parameterized complexity}

We use standard terminology from parameterized complexity. A parameterized problem is \emph{fixed-parameter tractable} (FPT) if instances of size $n$ with parameter value $\kappa$ can be solved in time
$f(\kappa)\, n^{O(1)}$
for some computable function $f$.

A \emph{kernelization} for a parameterized problem is a polynomial-time algorithm
which maps each instance $(I,\kappa)$ to an equivalent instance $(I',\kappa')$
such that
\[
|I'|+\kappa' \le g(\kappa)
\]
for some computable function $g$. The reduced instance is called a \emph{kernel}.
If $g$ is bounded by a polynomial in $\kappa$, then the problem admits a
\emph{polynomial kernel}.

For \textsc{Annotated Strong CFVC Number}, we call a triple $(G,k,X)$
\emph{valid} if $X$ is a twin cover of $G$; as noted above, this can be checked
in polynomial time. The parameter is
\[
\kappa := k+|X|.
\]
A kernelization for this annotated problem is therefore a polynomial-time algorithm
which maps each valid instance $(G,k,X)$ to an equivalent valid instance
$(G',k',X')$ such that $X'$ is a twin cover of $G'$ and
\[
|V(G')| + k' + |X'| \le h(\kappa)
\]
for some computable function $h$.

Since our reductions are graph reductions, we will usually state kernel bounds in
terms of the number of vertices of the reduced graph. Thus, when we say that
\textsc{Annotated Strong CFVC Number} admits a kernel on at most $h(\kappa)$
vertices, we mean that the output valid instance can be chosen so that
$|V(G')|\le h(\kappa)$. In our construction, either we output the fixed valid
no-instance $(K_2,1,V(K_2))$, or we keep $k'=k$ and $X'=X$; hence such a vertex
bound immediately yields a standard kernel in the usual encoding-size sense.

\section{Basic Structural Properties}

Throughout this section, let $G$ be a connected graph and let
$X \subseteq V(G)$ be a fixed twin cover of size $t$.

\begin{lemma}\label{lem:svcfc-proper}
Every strong CFVC coloring is proper.
\end{lemma}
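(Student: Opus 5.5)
The proof is a direct argument from the definition, and nothing beyond the definitions of the previous section is needed. Let $c$ be a strong CFVC coloring of $G$ and take an arbitrary edge $uv\in E(G)$. We show that $c(u)\neq c(v)$. Since $u$ and $v$ are distinct, the definition of a strong CFVC coloring gives a conflict-free shortest $u$-$v$ path $P$.

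The key step is to pin down what $P$ is. Because $uv$ is an edge, $d_G(u,v)=1$. A path of length $1$ between $u$ and $v$ consists of exactly the two vertices $u$ and $v$. So the only shortest $u$-$v$ path is $P=uv$, and this path must be conflict-free.

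Next we use the conflict-free condition. It says some color $\alpha$ appears on exactly one vertex of $P=uv$. Suppose instead that $c(u)=c(v)=\beta$. Then the only color on $P$ is $\beta$, and it appears on both vertices. No color occurs exactly once, which contradicts conflict-freeness. Hence $c(u)\neq c(v)$.

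Since the edge $uv$ was arbitrary, $c$ is proper. There is no genuine obstacle here. The only point needing care is that adjacency forces the shortest path to be the single edge, so we have no freedom in choosing $P$.
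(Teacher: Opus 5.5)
Your proof is correct and follows essentially the same approach as the paper: for an edge $uv$, the single edge is the unique shortest $u$-$v$ path, so it must be conflict-free, which forces $c(u)\neq c(v)$. You spell out the contradiction in the case $c(u)=c(v)$ slightly more explicitly than the paper does, but the argument is the same.
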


\begin{proof}
Let $c$ be a strong CFVC coloring of $G$, and let $uv \in E(G)$.
The unique shortest $u$-$v$ path is the edge $uv$ itself.
Hence that path must be conflict-free, which is possible only if $c(u) \neq c(v)$.
Therefore $c$ is proper. \qedbox
\end{proof}

\begin{lemma}\label{lem:shortest-path-twinclique}
Let $P=v_0v_1\cdots v_r$ be a shortest path in $G$, and let $C$ be a twin-clique.
If at least one endpoint of $P$ lies outside $C$, then $P$ contains at most one vertex of $C$.

Consequently, if the two endpoints of $P$ do not belong to the same twin-clique,
then $P$ contains at most one vertex from each twin-clique, meets at most $t+1$
twin-cliques, and has length at most $2t$.
\end{lemma}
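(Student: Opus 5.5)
The argument rests on the fact that all vertices of a twin-clique $C$ share the same closed neighborhood, which contains all of $C$.

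\textbf{First claim.} Suppose a shortest path $P=v_0\cdots v_r$ contains two vertices $v_i,v_j\in C$ with $i<j$. Since $v_i v_j\in E(G)$, shortestness forces $j=i+1$; otherwise replacing the subpath $v_i\cdots v_j$ by the edge $v_iv_j$ gives a shorter walk between the endpoints. Now let some endpoint, say $v_r$, lie outside $C$; the case $v_0\notin C$ is symmetric. Then $j<r$, so $v_{j+1}$ exists and is adjacent to $v_j$. As $N_G[v_i]=N_G[v_j]$, we get $v_{j+1}\in N_G[v_i]$. Since $P$ is simple, $v_{j+1}\neq v_i$, so $v_iv_{j+1}$ is an edge. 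Then $v_0\cdots v_i v_{j+1}\cdots v_r$ is a shorter $v_0$--$v_r$ path, a contradiction. Hence $P$ meets $C$ in at most one vertex.

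\textbf{Consequences.} If the endpoints of $P$ are not in a common twin-clique, then for every twin-clique $C$ at least one endpoint lies outside $C$, so $P$ contains at most one vertex of $C$.

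To bound the number of twin-cliques met, I use that distinct twin-cliques are distinct components of $G-X$ and are therefore nonadjacent. Hence two consecutive vertices of $P$ cannot lie in different twin-cliques, and by the above they cannot lie in the same one. So between any two consecutive vertices of $P$ lying outside $X$ there is at least one vertex of $X$. The vertices of $P$ in $X$ are distinct and number at most $t$, so they split $P$ into at most $t+1$ maximal segments avoiding $X$. Each such segment has at most one vertex, because its vertices would be consecutive, pairwise adjacent, and in distinct twin-cliques, which is impossible. Therefore $P$ has at most $t+1$ vertices outside $X$, and in particular meets at most $t+1$ twin-cliques.

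\textbf{Length bound.} The path has at most $t$ vertices in $X$ and at most $t+1$ outside, so it has at most $2t+1$ vertices and length at most $2t$.

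The only delicate step is the first claim, where both the consecutiveness of $v_i,v_j$ and the existence of the neighbor $v_{j+1}$ beyond the clique are needed. The rest is counting.
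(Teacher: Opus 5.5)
Your proof is correct and follows essentially the same route as the paper. You shortcut through the common closed neighborhood of $C$ using a path vertex just beyond the $C$-block, and then observe that the vertices of $P$ outside $X$ are separated by vertices of $X$, which gives at most $t+1$ of them and at most $2t+1$ vertices in total; your explicit step showing that two vertices of $C$ on $P$ must be consecutive also supplies a detail that the paper's ``maximal subpath with $i<j$'' takes for granted.
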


\begin{proof}
Suppose first that at least one endpoint of $P=v_0v_1\cdots v_r$ lies outside $C$,
and assume for contradiction that $P$ contains at least two vertices of $C$.
Let
\[
v_i v_{i+1}\cdots v_j
\]
be a maximal subpath of $P$ all of whose vertices lie in $C$, with $i<j$.

Because at least one endpoint of $P$ lies outside $C$, we have either $i>0$ or $j<r$.
Assume that $j<r$; the case $i>0$ is symmetric.
Then $v_{j+1}\notin C$.
Since $v_jv_{j+1}\in E(G)$ and all vertices of $C$ have the same neighborhood
outside $C$, it follows that $v_iv_{j+1}\in E(G)$.
Therefore replacing the subpath
\[
v_i v_{i+1}\cdots v_j v_{j+1}
\]
by the single edge $v_iv_{j+1}$ yields a shorter path, a contradiction.
Thus $P$ contains at most one vertex of $C$.

For the consequence, assume that the endpoints of $P$ do not lie in the same
twin-clique.
Then the first part applies to every twin-clique, so $P$ contains at most one
vertex from each twin-clique.
Since distinct twin-cliques are distinct connected components of $G-X$,
there are no edges between them.
Hence every vertex of $P$ that lies in $V(G)\setminus X$ has all its neighbors
on $P$ (if any) in $X$.
In particular, the vertices of $P\setminus X$ appear on the path as isolated
vertices separated by vertices of $X$.
Therefore
\[
|V(P)\setminus X| \leq |V(P)\cap X| + 1 \leq t+1.
\]

Each vertex of $V(P)\setminus X$ lies in a distinct twin-clique, so $P$ meets at
most $t+1$ twin-cliques.
Also, since $P$ is simple, it contains at most $t$ vertices of $X$.
Hence
\[
|V(P)| = |V(P)\cap X| + |V(P)\setminus X|
\leq t + (t+1) = 2t+1.
\]
Therefore the number of edges of $P$ is at most $2t$, as claimed. \qedbox
\end{proof}

\begin{corollary}\label{cor:long-shortest-path-hits-X}
Every shortest path of length at least $3$ contains a vertex of $X$.
\end{corollary}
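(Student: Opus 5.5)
We argue by contradiction. Let $P=v_0v_1\cdots v_r$ be a shortest path with $r\ge 3$, and suppose $V(P)\cap X=\emptyset$. Then every vertex of $P$ lies in $G-X$, and consecutive vertices $v_iv_{i+1}$ are joined by edges of $G-X$. Hence the whole path lies inside a single connected component of $G-X$, that is, inside a single twin-clique $C$.

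As observed in the preliminaries, every twin-clique is a clique. So $v_0$ and $v_r$ are adjacent, and $d_G(v_0,v_r)=1$. On the other hand, $P$ is a shortest $v_0$-$v_r$ path of length $r\ge 3>1$. This is a contradiction, so $P$ must contain a vertex of $X$.

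Alternatively, one can route the argument through Lemma~\ref{lem:shortest-path-twinclique}. If the endpoints of $P$ lie in different twin-cliques, the lemma says $P$ contains at most one vertex from each twin-clique. Since there are no edges between distinct twin-cliques, two consecutive vertices of $P$ cannot both lie outside $X$, so some vertex lies in $X$ once $r\ge 1$. If instead the endpoints lie in the same twin-clique, the clique argument above shows $r=1$, which is excluded.

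There is no real obstacle here. The only point needing care is that a path avoiding $X$ is confined to one component of $G-X$. The direct argument in the first two paragraphs is the cleanest, and it in fact shows the stronger statement that every shortest path of length at least $2$ meets $X$.
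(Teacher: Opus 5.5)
Your proof is correct and is essentially the paper's argument: a shortest path avoiding $X$ lies in one component of $G-X$, which is a clique, so it has length at most $1$. Your remark that this already forces every shortest path of length at least $2$ to meet $X$ is also correct.
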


\begin{proof}
Let \(P\) be a shortest path of length at least \(3\).
Its endpoints cannot belong to the same twin-clique, because every twin-clique is a clique.
If \(P\) avoided \(X\), then \(P\) would lie entirely in \(G \setminus X\), hence inside a single connected component of \(G \setminus X\), which is a clique.
But then \(P\) would have length at most \(1\), a contradiction.
Therefore \(P\) contains a vertex of \(X\). \qedbox
\end{proof}

\section{An FPT kernel for the combined parameter $t+k$}

In this section, we first prove a kernel bound for \textsc{Annotated Strong CFVC Number},
in which the input is accompanied by a twin cover $X$ of size $t$.
We then derive, in Corollary~\ref{cor:tc-plus-k-kernel}, the corresponding unanotated
result parameterized by $\operatorname{tc}(G)+k$.
Let $(G,k,X)$ be a valid instance of \textsc{Annotated Strong CFVC Number}, and let $t:=|X|$.

By Lemma~\ref{lem:svcfc-proper}, if some twin-clique has size greater than $k$,
then $(G,k,X)$ is immediately a no-instance.
Hence, from now on, we may assume that every twin-clique has size at most $k$.

\begin{lemma}\label{lem:swap-twincliques}
Let $S\subseteq X$ and let $s\in [k]$.
If $C,D \in \mathcal{C}_{S,s}$, then every bijection $\beta \colon C \to D$
extends to an automorphism of $G$ that fixes every vertex in
$V(G)\setminus (C\cup D)$.
\end{lemma}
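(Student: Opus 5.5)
The plan is to define the extension explicitly and check that it preserves adjacency. If $C=D$, take $\sigma=\beta$ on $C$ and the identity elsewhere. If $C\neq D$, define $\sigma\colon V(G)\to V(G)$ by $\sigma(v)=\beta(v)$ for $v\in C$, $\sigma(v)=\beta^{-1}(v)$ for $v\in D$, and $\sigma(v)=v$ otherwise. This is well defined because distinct twin-cliques are distinct components of $G-X$ and hence disjoint. Since $|C|=|D|=s$ and $\beta$ is a bijection, $\sigma$ is an involutive bijection of $V(G)$ fixing every vertex outside $C\cup D$. It remains to show that $uv\in E(G)$ if and only if $\sigma(u)\sigma(v)\in E(G)$.

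I will verify this by cases on where $u$ and $v$ lie. The facts used are the following:
\begin{itemize}
\item every vertex of $C$ and every vertex of $D$ has neighborhood exactly $S$ in $X$;
\item vertices of $C$ and $D$ have no neighbors in $V(G)\setminus(X\cup C\cup D)$ and none in each other, since twin-cliques are components of $G-X$;
\item $C$ and $D$ are cliques.
\end{itemize}

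The cases are these.
\begin{enumerate}
\item Both $u,v\notin C\cup D$. Then $\sigma$ fixes both endpoints, so adjacency is trivially preserved.
\item Both $u,v$ lie in $C$, or both lie in $D$. The images again lie in a common clique, so the edge is preserved.
\item One endpoint lies in $C$ and the other in $D$. These vertices are nonadjacent, and their images are also one in $D$ and one in $C$, hence also nonadjacent.
\item One endpoint, say $u$, lies in $C\cup D$ and $v\notin C\cup D$. Then $\sigma(v)=v$, and $u$ and $\sigma(u)$ both have neighborhood outside $C\cup D$ equal to $S$. So $uv\in E(G)$ if and only if $v\in S$, if and only if $\sigma(u)v\in E(G)$.
\end{enumerate}

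The only step needing care is the last case. It relies on the twin-cover structure, namely that $N_G(v)\cap X=S$ and that $N_G(v)\setminus(C\cup X)=\emptyset$ for $v\in C$. These were established when twin-cliques were shown to be cliques whose vertices share a closed neighborhood and all have the same $X$-neighborhood, and when $N_X(C)$ was defined.
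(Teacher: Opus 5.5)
Your proposal is correct and is essentially the paper's proof. You use the same swap map ($\beta$ on $C$, $\beta^{-1}$ on $D$, identity elsewhere), justified by the same facts ($C$ and $D$ are anticomplete cliques and every vertex of $C\cup D$ has neighbourhood exactly $S$ outside $C\cup D$), and your case analysis spells out the paper's one-line adjacency check; your separate treatment of $C=D$ is a small improvement, since the paper's piecewise definition is ambiguous in that case.
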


\begin{proof}
Define a permutation $\pi$ of $V(G)$ by
\[
\pi(v)=
\begin{cases}
\beta(v), & \text{if } v\in C,\\
\beta^{-1}(v), & \text{if } v\in D,\\
v, & \text{otherwise.}
\end{cases}
\]
Since $C$ and $D$ are cliques of the same size, there are no edges between $C$ and $D$,
and every vertex of $C\cup D$ has neighborhood exactly $S$ in $X$
and no neighbors outside $X\cup C\cup D$,
the map $\pi$ preserves adjacency.
Hence $\pi$ is an automorphism of $G$. \qedbox
\end{proof}

We now state the kernelization rule.

\medskip
\noindent
\textbf{Reduction Rule.}
Let $S\subseteq X$ and let $s\in [k]$.
If
$|\mathcal{C}_{S,s}| > (t+1)\binom{k}{s}$,
then delete an arbitrary twin-clique $C\in \mathcal{C}_{S,s}$.
\medskip

\begin{lemma}\label{lem:reduction-safe}
The reduction rule is safe: if an application of the rule transforms a valid
instance $(G,k,X)$ into $(G-C,k,X)$, then $(G,k,X)$ is a yes-instance if and
only if $(G-C,k,X)$ is a yes-instance.
\end{lemma}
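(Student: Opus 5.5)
We prove the two directions separately. First record the distances that change when $C$ is deleted. Since $X$ is a twin cover, $G-C$ is still connected: every vertex of $C$ has the same neighbours outside $C$, namely $S$. If $S=\emptyset$, then $C$ is the whole graph, which contradicts $|\mathcal C_{S,s}|>1$. By Lemma~\ref{lem:shortest-path-twinclique}, a shortest path whose endpoints lie outside $C$ meets $C$ in at most one vertex. Such a vertex $w\in C$ has closed-neighbourhood structure identical to that of a vertex of any other $D\in\mathcal C_{S,s}$. Replacing $w$ by a vertex of $D$ not on the path (exists unless the path already uses $D$; choose $D$ among the many clones avoiding the path's at most $t+1$ twin-cliques) yields a path of the same length in $G-C$. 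Hence $d_{G-C}=d_G$ on $V(G)\setminus C$, and shortest paths in $G-C$ are exactly the shortest paths of $G$ avoiding $C$.

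\emph{($\Rightarrow$)} Let $c$ be a strong CFVC $k$-coloring of $G$. We restrict it to $G-C$ and must check every pair $u,v\notin C$. Let $P$ be a conflict-free shortest $u$-$v$ path in $G$. If $P$ avoids $C$, we are done. Otherwise $P$ contains exactly one vertex $w\in C$, with $c(w)=\alpha$.

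Use pigeonhole on the colour sets. There are at most $\binom{k}{s}$ possible sets $\operatorname{col}_c(D)$ for $D\in\mathcal C_{S,s}$, and there are more than $(t+1)\binom ks$ such twin-cliques. So some colour set is shared by at least $t+2$ of them. Even better, the twin-cliques $D\ne C$ in $\mathcal C_{S,s}$ having $\alpha\in\operatorname{col}_c(D)$ should be plentiful. This is the delicate point: the count is only guaranteed for the most popular colour set, not for sets containing $\alpha$.

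To fix it, we choose the deleted $C$ cleverly. Since all cliques of a type are interchangeable by Lemma~\ref{lem:swap-twincliques}, deleting an arbitrary one is equivalent up to isomorphism to deleting a chosen one. We therefore delete a $C$ whose colour set $\operatorname{col}_c(C)$ is shared by at least $t+2$ members of $\mathcal C_{S,s}$. Then $P$ meets at most $t+1$ twin-cliques, so some $D\neq C$ with $\operatorname{col}_c(D)=\operatorname{col}_c(C)$ is disjoint from $P$. Take $w'\in D$ with $c(w')=\alpha$ and replace $w$ by $w'$. This preserves adjacency along $P$ (both have neighbourhood $S$ in $X$, and the neighbours of $w$ on $P$ lie in $X$ because $P$ has one vertex per twin-clique), preserves the length, and preserves the colour multiset. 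So the new path is a conflict-free shortest path in $G-C$.

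\emph{($\Leftarrow$)} Let $c'$ be a strong CFVC $k$-coloring of $G-C$. By pigeonhole, choose $D\in\mathcal C_{S,s}$ in $G-C$, and extend $c'$ to $C$ by copying $D$ via a bijection $\beta\colon C\to D$. Pairs inside $V(G)\setminus C$ keep their paths, since distances are unchanged. Pairs $u,v$ inside $C$ are adjacent, and the edge between them is conflict-free because the copied colouring is proper on $C$.

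For $u\in C$ and $v\notin C\cup D$, use the path for $\beta(u),v$ in $G-C$: it contains $\beta(u)$ and no other vertex of $D$. Apply the automorphism $\pi$ from Lemma~\ref{lem:swap-twincliques}, which swaps $C$ and $D$; under the extended colouring it preserves colours. This gives a shortest $u$-$v$ path with the same colours, and distances correspond because $\pi$ is an automorphism of $G$.

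For $u\in C$ and $v\in D$, the distance is $2$ via any $x\in S$. We need some $x\in S$ whose colour differs from both endpoints, or endpoints with distinct colours. Take $v'=\beta(u)$ and $v$, which lie in $D$ and are adjacent, hence have distinct colours. If $c(v)\ne c(u)$, then $c(u)$ appears exactly once on the path $uxv$ (proper colouring gives $c(x)\ne c(u)$). If $c(v)=c(u)$, then $v=\beta(u)$. Here we instead use a different clone $D'$ in the same colour class, which exists by the choice above, to get a conflict-free path between $u$ and $v$. If that fails, a fresh argument is required; I expect this last equal-colour subcase to be the main obstacle.
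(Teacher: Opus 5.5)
Your distance-equality argument, your treatment of ($\Rightarrow$), and your automorphism argument for $u\in C$, $v\notin C\cup D$ in ($\Leftarrow$) all match the paper's proof in substance. In ($\Rightarrow$), deleting, up to the automorphism of Lemma~\ref{lem:swap-twincliques}, a clique whose colour set is shared by at least $t+2$ members is the same move as the paper's ``compose $f$ with an automorphism''. The gap is the last case of ($\Leftarrow$), which you leave open: $u\in C$ and $v\in D$ with $c(u)=c(v)$, i.e.\ $v=\beta(u)$. There you write that ``a fresh argument is required''. As written, the proposal therefore does not prove the lemma.

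The missing step is immediate, and the case split on whether $c(u)=c(v)$ is unnecessary. Take any $x\in S$. Because $C$ and $D$ are anticomplete, $u\,x\,v$ is a shortest $u$--$v$ path. Since $x$ is adjacent to both endpoints, properness of the extended colouring gives $c(x)\neq c(u)$ and $c(x)\neq c(v)$. So $c(x)$ occurs exactly once on the path, even when the endpoints share a colour; this is exactly the paper's argument.

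This step uses properness of the extension on the edges between $C$ and $S$, which you use but never check. It holds because $c(w)=c'(\beta(w))$, the vertex $\beta(w)$ is adjacent in $G-C$ to every $x\in S$, and $c'$ is proper by Lemma~\ref{lem:svcfc-proper}.

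Your proposed alternative fix would not have worked. The common neighbours of $u$ and $v$ are exactly the vertices of $S$, so no other clone $D'$ can lie on a shortest $u$--$v$ path. Also, the ``colour class'' you invoke comes from the pigeonhole step in ($\Rightarrow$), which concerns a colouring of $G$. It is not available when you start from a colouring of $G-C$.
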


\begin{proof}[Proof sketch]
    In the interest of space, we present only a proof sketch here, with the full proof occuring in the appendix hereto.

    Let $C\in\mathcal{C}_{S,s}$ be deleted and choose
$D^\ast\in\mathcal{C}_{S,s}\setminus\{C\}$.
The key observation is that
$d_{G-C}(u,v)=d_G(u,v)$ for all $u,v\in V(G)\setminus C$:
if a shortest $u$-$v$ path in $G$ uses a vertex of $C$, then by
Lemma~\ref{lem:shortest-path-twinclique} it uses exactly one, which can be
replaced by any vertex of $D^\ast$.

For $(\Leftarrow)$, extend a strong CFVC $k$-coloring of $G-C$ to $C$ by copying
colors from $D^\ast$ via a bijection $C\to D^\ast$.

For $(\Rightarrow)$, among the more than $(t+1)\binom{k}{s}$ cliques of type
$(S,s)$, choose $t+2$ with the same color set. Any shortest path meets at most
$t+1$ twin-cliques, so one such clique is disjoint from the path; replacing the
unique vertex of $C$ on the path by the matching vertex in that spare clique
preserves both length and color multiset.

 \qedbox
\end{proof}

We can now bound the size of the reduced instance.

\begin{theorem}\label{thm:kernel-t-plus-k}
Given a valid instance $(G,k,X)$ of \textsc{Annotated Strong CFVC Number}
with $t:=|X|$, one can in polynomial time compute an equivalent valid instance
$(G',k',X')$ such that
\[
|V(G')|
\le
\max\!\left\{2,\; t + (t+1)k\,2^{t+k-1}\right\}.
\]
In particular, \textsc{Annotated Strong CFVC Number} admits a kernel and is
fixed-parameter tractable parameterized by $t+k$.
\end{theorem}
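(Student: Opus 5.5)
The plan is to run the following algorithm. First, test whether some twin-clique has more than $k$ vertices. If so, then by Lemma~\ref{lem:svcfc-proper} the instance is a no-instance, and we output the fixed valid no-instance $(K_2,1,V(K_2))$. This output is correct because $\operatorname{svcfc}(K_2)=2>1$, and it has $2$ vertices. Otherwise, every twin-clique has size in $[k]$. We then compute the twin-cliques as the components of $G-X$, group them by type $(S,s)$ with $S\subseteq X$ and $s\in[k]$, and apply the Reduction Rule exhaustively: while some class $\mathcal{C}_{S,s}$ has more than $(t+1)\binom{k}{s}$ members, delete one of them.

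Correctness rests on two facts. First, each deletion is safe by Lemma~\ref{lem:reduction-safe}. Second, validity and connectivity are preserved. Deleting a whole component $C$ of $G-X$ leaves $X$ a twin cover, since the remaining edges outside $X$ still join vertices with equal closed neighborhoods (the deletion removes the same set $C$ from both neighborhoods). The graph stays connected because another clique $D^\ast$ of the same type survives: any path through $C$ can be rerouted through $D^\ast$, which is exactly the distance-preservation argument in the proof sketch.

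Running time: each application removes at least one vertex, so there are at most $n$ applications. Each one needs only a recomputation of types, which takes polynomial time, and the types can be identified without enumerating all $2^t$ subsets.

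For the size bound, after exhaustive reduction every class satisfies $|\mathcal{C}_{S,s}|\le (t+1)\binom{k}{s}$, so
\[
|V(G')|\le t+\sum_{S\subseteq X}\sum_{s=1}^{k} s\,(t+1)\binom{k}{s} = t+(t+1)2^t\sum_{s=1}^{k}s\binom{k}{s} = t+(t+1)2^t\,k\,2^{k-1},
\]
using $\sum_s s\binom{k}{s}=k2^{k-1}$. This gives the claimed $\max\{2,\,t+(t+1)k2^{t+k-1}\}$.

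FPT membership then follows by brute force on the kernel. We try all $k^{|V(G')|}$ colorings and check each pair of vertices using BFS shortest-path structure. Enumerating shortest paths may be exponential, but it is bounded in the kernel size, and a DP over color counts also works. The main point needing care is the bookkeeping that keeps $X'=X$ a valid twin cover and $G'$ connected, which I expect to be routine as argued above.
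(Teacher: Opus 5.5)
Your proposal is correct and follows the paper's proof essentially step for step. You reject via the fixed no-instance $(K_2,1,V(K_2))$ when some twin-clique has more than $k$ vertices; otherwise you apply the reduction rule exhaustively (safe by Lemma~\ref{lem:reduction-safe}) and bound $|V(G')\setminus X|$ by $(t+1)2^t\sum_{s=1}^{k}s\binom{k}{s}=(t+1)k\,2^{t+k-1}$. Your extra checks are sound: that $\operatorname{svcfc}(K_2)=2$, that the twin cover and connectivity are preserved, that types are computed directly rather than by enumerating subsets, and that the kernel can be solved by brute force. They make explicit what the paper leaves to the appendix proof of the lemma or to the standard fact that a kernel implies FPT.
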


\begin{proof}
If some twin-clique of $G-X$ has size greater than $k$, then by
Lemma~\ref{lem:svcfc-proper} the instance is a no-instance. We output the fixed
valid no-instance $(K_2,1,V(K_2))$, which has two vertices and therefore satisfies
the stated bound. This proves the theorem in this case.

Assume from now on that every twin-clique has size at most $k$.
For each pair $(S,s)$ with $S\subseteq X$ and $s\in [k]$,
apply the reduction rule exhaustively until
$|\mathcal{C}_{S,s}| \leq (t+1)\binom{k}{s}$.
By Lemma~\ref{lem:reduction-safe}, each application is safe.
The exhaustive application can be implemented in polynomial time by computing
the twin-cliques of $G-X$, their sizes, and their neighborhoods in $X$.

Let $G'$ denote the resulting graph. Since we only delete vertices from $V(G)\setminus X$,
the same set $X$ is a twin cover of $G'$.
We output the valid annotated instance $(G',k,X)$.

For each $S\subseteq X$ and each $s\in [k]$, the graph $G'$ contains at most
$(t+1)\binom{k}{s}$ twin-cliques in $\mathcal{C}_{S,s}$.
Hence
\[
|V(G')\setminus X|
\leq
\sum_{S\subseteq X}\sum_{s=1}^{k} s \cdot (t+1)\binom{k}{s}
=
(t+1)2^t \sum_{s=1}^{k} s\binom{k}{s}.
\]
Using the identity
$\sum_{s=1}^{k} s\binom{k}{s} = k2^{k-1}$,
we obtain
\[
|V(G')\setminus X|
\leq
(t+1)2^t \cdot k2^{k-1}
=
(t+1)k\,2^{t+k-1}.
\]
Adding the $t$ vertices of $X$ yields
\[
|V(G')|
\leq
t + (t+1)k\,2^{t+k-1}
\le
\max\!\left\{2,\; t + (t+1)k\,2^{t+k-1}\right\}.
\]

Thus, in all cases, we compute an equivalent valid instance satisfying the
stated bound. This proves the theorem. \qedbox
\end{proof}

We now convert the annotated kernel of Theorem~\ref{thm:kernel-t-plus-k} into a standard parameterized result in terms of $\operatorname{tc}(G)+k$.

\begin{corollary}\label{cor:tc-plus-k-kernel}
\textsc{Strong CFVC Number} admits a kernel parameterized by
$\operatorname{tc}(G)+k$. More precisely, every instance $(G,k)$ can be reduced
in polynomial time to an equivalent instance on at most
\[
\max\!\left\{
2,\;
2\,\operatorname{tc}(G)
+
\bigl(2\,\operatorname{tc}(G)+1\bigr)k\,2^{\,2\operatorname{tc}(G)+k-1}
\right\}
\]
vertices.
\end{corollary}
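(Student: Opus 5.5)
The plan is to obtain in polynomial time a twin cover $X$ of size at most $2\operatorname{tc}(G)$, and then invoke Theorem~\ref{thm:kernel-t-plus-k} on the annotated instance $(G,k,X)$, discarding the annotation at the end.

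\emph{Step 1: computing the cover.} Follow the standard approach of Ganian~\cite{ganian2015improving}. Remove from $G$ every twin edge, meaning every edge $uv$ with $N_G[u]=N_G[v]$; this can be tested edge by edge in polynomial time. Call the resulting graph $H$, with edge set the non-twin edges of $G$. A set $X\subseteq V(G)$ is a twin cover of $G$ exactly when every non-twin edge of $G$ has an endpoint in $X$, that is, exactly when $X$ is a vertex cover of $H$. Consequently $\operatorname{tc}(G)$ equals the vertex cover number of $H$. Take $X$ to be the vertex set of a maximal matching of $H$. Then $X$ is a vertex cover of $H$ with
\[
|X|\le 2\operatorname{tc}(G).
\]
Hence $X$ is a twin cover of $G$, and $(G,k,X)$ is a valid annotated instance.

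\emph{Step 2: applying the theorem.} Apply Theorem~\ref{thm:kernel-t-plus-k} to $(G,k,X)$ with $t=|X|\le 2\operatorname{tc}(G)$. This gives an equivalent valid instance $(G',k',X')$ with at most $\max\{2,\,t+(t+1)k2^{t+k-1}\}$ vertices. The expression $t+(t+1)k2^{t+k-1}$ is monotone nondecreasing in $t$, so substituting $t\le 2\operatorname{tc}(G)$ yields the claimed bound.

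\emph{Step 3: dropping the annotation.} The annotated and unannotated questions are the same, namely whether $\operatorname{svcfc}\le k$; the twin cover is only a promise. Therefore $(G',k')$ is an equivalent instance of \textsc{Strong CFVC Number}. By the construction in the theorem, $G'$ is either $K_2$ with $k'=1$, which is a genuine no-instance since $\operatorname{svcfc}(K_2)=2$, or an induced subgraph of $G$ with $k'=k$. Two facts need checking here. First, $G'$ is connected: the reduction deletes only whole twin-cliques, and each remaining vertex outside $X$ still has its neighbours in $X$. Second, the case $X=\emptyset$ is harmless: then $G$ is a clique, and the reduction keeps at most one clique of each type.

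\emph{Main obstacle.} The main point to verify carefully is the equivalence between twin covers of $G$ and vertex covers of the non-twin-edge graph $H$. This holds directly from the definition of a twin cover: every edge of $G-X$ must be a twin edge, which says precisely that every non-twin edge meets $X$. The rest of the argument is bookkeeping, namely the monotonicity substitution and the parameter bound $k'+\operatorname{tc}(G')\le k+\operatorname{tc}(G)$. The latter holds because $\operatorname{tc}(G')\le |X|\le 2\operatorname{tc}(G)$, so the size of the output is bounded by a function of $\operatorname{tc}(G)+k$.
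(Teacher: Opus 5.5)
Your proposal is correct and follows essentially the paper's route: a maximal matching in the graph of non-twin edges gives a twin cover of size at most $2\operatorname{tc}(G)$ (you verify the twin-cover/vertex-cover equivalence directly from the definition instead of citing Ganian's Lemma~2), then Theorem~\ref{thm:kernel-t-plus-k} is applied and the annotation is forgotten. Two minor bookkeeping points: $\operatorname{tc}(G')\le |X|\le 2\operatorname{tc}(G)$ does not by itself give $k'+\operatorname{tc}(G')\le k+\operatorname{tc}(G)$ (that inequality does hold, since twin edges remain twin edges under vertex deletion, and the weaker bound would suffice anyway); and connectivity of $G'$ comes from the theorem via rerouting through a remaining clique of the same type, not merely from remaining vertices keeping their neighbours in $X$.
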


\begin{proof}
Let $G^\dagger$ be the graph obtained from $G$ by deleting all twin edges.
By Lemma~2 of \cite{ganian2015improving}, the vertex-cover number of $G^\dagger$
is exactly $\operatorname{tc}(G)$.

Compute a maximal matching $M$ in $G^\dagger$, and let $Y$ be the set of endpoints
of the edges of $M$. Since $M$ is maximal, $Y$ is a vertex cover of $G^\dagger$.
Moreover,
\[
|Y| = 2|M| \leq 2\,\operatorname{vc}(G^\dagger)
= 2\,\operatorname{tc}(G).
\]

Because every vertex cover of $G^\dagger$ is a twin cover of $G$, the set $Y$
is a twin cover of $G$ of size at most $2\,\operatorname{tc}(G)$.
Applying Theorem~\ref{thm:kernel-t-plus-k} to the annotated instance $(G,k,Y)$
therefore yields, in polynomial time, an equivalent valid annotated instance whose
underlying graph has at most
$\max\!\left\{2,\; |Y| + (|Y|+1)k\,2^{|Y|+k-1}\right\}$
vertices. Forgetting the annotation, we obtain an equivalent instance of
\textsc{Strong CFVC Number} on at most the same number of vertices.

Since $|Y|\le 2\,\operatorname{tc}(G)$, this is at most
\[
\max\!\left\{
2,\;
2\,\operatorname{tc}(G)
+
\bigl(2\,\operatorname{tc}(G)+1\bigr)k\,2^{\,2\operatorname{tc}(G)+k-1}
\right\}.
\]

Hence \textsc{Strong CFVC Number} admits a kernel parameterized by
$\operatorname{tc}(G)+k$. \qedbox
\end{proof}

\begin{corollary}

For every fixed integer $k \ge 1$, deciding whether $\operatorname{svcfc}(G) \le k$ is fixed-parameter tractable parameterized by $\operatorname{tc}(G)$.
\end{corollary}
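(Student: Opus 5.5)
The plan is to combine Corollary~\ref{cor:tc-plus-k-kernel} with brute force on the kernel. Fix $k\ge 1$ and let $(G,k)$ be an instance. First I apply the polynomial-time reduction of Corollary~\ref{cor:tc-plus-k-kernel}. It computes, via a maximal matching in $G^\dagger$, a twin cover $Y$ with $|Y|\le 2\operatorname{tc}(G)$, and outputs an equivalent instance $(G',k)$ with at most
\[
N := \max\!\left\{2,\; 2\,\operatorname{tc}(G)+\bigl(2\,\operatorname{tc}(G)+1\bigr)k\,2^{\,2\operatorname{tc}(G)+k-1}\right\}
\]
vertices. The target $k$ is unchanged; in the degenerate case the output is $(K_2,1)$. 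Since $k$ is a fixed constant, $N$ is bounded by a computable function $f(\operatorname{tc}(G))$ alone.

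Next I decide $(G',k)$ exhaustively. I enumerate all $k^{|V(G')|}\le k^{N}$ maps $c\colon V(G')\to[k]$. For each map, I check whether every pair of distinct vertices $u,v$ is joined by a conflict-free shortest path.

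This check must avoid enumerating paths naively in a way that depends badly on $n$. Here it is harmless, because $G'$ has at most $N$ vertices: I can list all simple paths in at most $N!\cdot N$ time, compute distances by BFS, and test each shortest path for a color of multiplicity one. The total running time is therefore $g(\operatorname{tc}(G))$ for the brute-force part, plus polynomial time for the reduction. That gives $g(\operatorname{tc}(G))\cdot n^{O(1)}$, which is FPT in $\operatorname{tc}(G)$.

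The only step needing care is the parameter bookkeeping. The kernel bound of Corollary~\ref{cor:tc-plus-k-kernel} depends on $k$, so the statement relies on $k$ being fixed; the hidden function in the running time may be exponential in $k$. The algorithm also never needs $\operatorname{tc}(G)$ explicitly, since the reduction works from the computed $Y$. The same argument in fact shows membership in FPT for the combined parameter $\operatorname{tc}(G)+k$, and the statement is its specialization.
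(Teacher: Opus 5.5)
Your proposal is correct and follows the paper's route: the paper's proof invokes Corollary~\ref{cor:tc-plus-k-kernel} to get fixed-parameter tractability in $\operatorname{tc}(G)+k$ and then specializes to fixed $k$. Your explicit step of enumerating all $k^{N}$ colorings of the kernel and checking shortest paths, on a graph whose size is bounded in terms of $\operatorname{tc}(G)$ alone, spells out the standard ``kernel implies FPT'' argument that the paper leaves implicit.
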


\begin{proof}
By Corollary~\ref{cor:tc-plus-k-kernel}, \textsc{Strong CFVC Number} is
fixed-parameter tractable parameterized by $\operatorname{tc}(G)+k$.
For every fixed integer $k$, this is fixed-parameter tractability
parameterized by $\operatorname{tc}(G)$ alone. \qedbox
\end{proof}

\section{Bounding the strong CFVC number via the chromatic number}

In this section, we show that for a graph equipped with a twin cover $X$ of size $t$,
the strong conflict-free vertex-connection number differs from the chromatic number
by at most $t$.
We also give an exact formula for $\chi(G)$ in terms of a fixed twin cover.

Throughout this section, let $G$ be a connected graph and let
$X \subseteq V(G)$ be a fixed twin cover of size $t$.

\subsection{A $\chi(G)+t$ upper bound}

We begin with a slightly more general statement than the bound we will ultimately use.

\begin{theorem}\label{thm:chi-upper-from-Y}
Let $Y \subseteq X$ be such that every shortest path in $G$ of length at least $3$
contains a vertex of $Y$.
Then
$\operatorname{svcfc}(G) \leq \chi(G)+|Y|$.
\end{theorem}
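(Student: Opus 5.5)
Start from an optimal proper coloring $c_0$ of $G$ with $\chi(G)$ colors. Define $c$ by giving each vertex $y\in Y$ its own fresh color, distinct from all colors of $c_0$ and from one another, and setting $c(v)=c_0(v)$ for every $v\notin Y$. Then $c$ uses at most $\chi(G)+|Y|$ colors, and it remains proper, since recoloring a vertex with a color used nowhere else cannot create a monochromatic edge. It remains to show that for every pair of distinct vertices $u,v$ some shortest $u$-$v$ path is conflict-free under $c$. We split by $d:=d_G(u,v)$.

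\emph{Case $d=1$.} The path is the edge $uv$. Since $c$ is proper, both colors occur exactly once.

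\emph{Case $d=2$.} Take any shortest path $uwv$. Since $c$ is proper, $c(w)\neq c(u)$ and $c(w)\neq c(v)$. Hence $c(w)$ occurs exactly once on the path, whether or not $c(u)=c(v)$.

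\emph{Case $d\ge 3$.} Take any shortest $u$-$v$ path $P$. By hypothesis $P$ contains a vertex $y\in Y$. The color $c(y)$ is used on no other vertex of $G$, and $P$ is simple, so $c(y)$ occurs exactly once on $P$. Thus $P$ is conflict-free.

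Together the three cases show that $c$ is a strong CFVC coloring with at most $\chi(G)+|Y|$ colors, so $\operatorname{svcfc}(G)\le\chi(G)+|Y|$. No step is a real obstacle. The only point to handle carefully is the $d=2$ case, where the endpoints may share a color; properness settles it through the middle vertex.

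As a sanity check for the later corollary, taking $Y=X$ is admissible by Corollary~\ref{cor:long-shortest-path-hits-X}. This gives $\operatorname{svcfc}(G)\le\chi(G)+t$. The matching lower bound $\chi(G)\le\operatorname{svcfc}(G)$ follows from Lemma~\ref{lem:svcfc-proper}.
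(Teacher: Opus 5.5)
Your proof is correct and matches the paper's own proof essentially step for step. Like the paper, you start from an optimal proper coloring, give each vertex of $Y$ its own fresh color, and check distances $1$, $2$ and $\ge 3$ using properness for the first two and the hitting hypothesis on $Y$ for the last.
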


\begin{proof}
Let $\varphi$ be a proper $\chi(G)$-coloring of $G$.
Construct a new coloring $\psi$ as follows:
for every vertex $y\in Y$, assign a fresh color $\alpha_y$, where the colors
$\alpha_y$ are pairwise distinct and do not appear on any vertex of $V(G)\setminus Y$;
for every vertex $v\in V(G)\setminus Y$, set $\psi(v):=\varphi(v)$.

The coloring $\psi$ uses at most $\chi(G)+|Y|$ colors.
Moreover, $\psi$ is proper: on $V(G)\setminus Y$ it agrees with the proper coloring
$\varphi$, and every vertex of $Y$ receives a fresh color distinct from all other colors.

We claim that $\psi$ is a strong CFVC coloring.
Let $u,v\in V(G)$ be distinct.

If $d_G(u,v)=1$, then the edge $uv$ is a shortest $u$-$v$ path.
Since $\psi$ is proper, that edge is conflict-free.

If $d_G(u,v)=2$, then every shortest $u$-$v$ path has the form $u-w-v$.
Because $\psi$ is proper, the middle vertex $w$ has a color different from the colors
of $u$ and $v$.
Hence the color $\psi(w)$ appears exactly once on the path, and the path is conflict-free.

Finally, assume that $d_G(u,v)\geq 3$.
Let $P$ be any shortest $u$-$v$ path.
By the assumption on $Y$, the path $P$ contains some vertex $y\in Y$.
By construction, the color $\alpha_y=\psi(y)$ appears nowhere else in $G$.
Therefore it appears exactly once on $P$, so $P$ is conflict-free.

Thus every two vertices of $G$ are joined by a conflict-free shortest path under $\psi$.
Hence $\psi$ is a strong CFVC coloring, and
$\operatorname{svcfc}(G) \leq \chi(G)+|Y|$. \qedbox
\end{proof}

We obtain the promised bound by taking $Y=X$.

\begin{corollary}\label{cor:svcfc-between-chi-and-chi-plus-t}
For every connected graph $G$ with twin cover $X$ of size $t$, $\chi(G)\leq \operatorname{svcfc}(G)\leq \chi(G)+t$.
In particular, the additive gap $\operatorname{svcfc}(G)-\chi(G)$ is at most $t$.
\end{corollary}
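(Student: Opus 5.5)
The corollary has two inequalities, and I would prove each from a result already in the paper.

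For the lower bound $\chi(G)\le\operatorname{svcfc}(G)$, I would invoke Lemma~\ref{lem:svcfc-proper}. Take a strong CFVC coloring $c$ that uses exactly $\operatorname{svcfc}(G)$ colors. By that lemma, $c$ is proper. Relabeling its colors to $[\operatorname{svcfc}(G)]$ preserves properness, so $G$ admits a proper $\operatorname{svcfc}(G)$-coloring. Hence $\chi(G)\le\operatorname{svcfc}(G)$. This direction uses nothing about the twin cover.

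For the upper bound $\operatorname{svcfc}(G)\le\chi(G)+t$, I would apply Theorem~\ref{thm:chi-upper-from-Y} with $Y:=X$. The only hypothesis to check is that every shortest path of length at least $3$ contains a vertex of $X$. This is exactly Corollary~\ref{cor:long-shortest-path-hits-X}. With $|Y|=|X|=t$, the theorem gives $\operatorname{svcfc}(G)\le\chi(G)+t$.

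Combining the two inequalities gives $0\le\operatorname{svcfc}(G)-\chi(G)\le t$, which is the claimed bound on the additive gap. I do not expect a real obstacle here. The only point needing care is the normalization in the lower bound: the number of colors of a coloring means $|c(V(G))|$, so a relabeling to an initial segment $[\,|c(V(G))|\,]$ is always available.
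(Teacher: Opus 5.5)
Your proposal is correct and matches the paper's proof: the lower bound comes from Lemma~\ref{lem:svcfc-proper}, and the upper bound comes from Theorem~\ref{thm:chi-upper-from-Y} with $Y=X$, whose hypothesis is exactly Corollary~\ref{cor:long-shortest-path-hits-X}. Your remark about relabeling the colors to $[\operatorname{svcfc}(G)]$ is a harmless detail that the paper leaves implicit.
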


\begin{proof}
The lower bound follows immediately from Lemma~\ref{lem:svcfc-proper}.
For the upper bound, Corollary~\ref{cor:long-shortest-path-hits-X} shows that
every shortest path of length at least $3$ contains a vertex of $X$.
Hence Theorem~\ref{thm:chi-upper-from-Y} applies with $Y=X$. \qedbox
\end{proof}

\subsection{An exact formula for the chromatic number}

We now show that, once the coloring of the twin cover is fixed,
the chromatic number of the whole graph is determined exactly by the quantities
$m(S)$ and $|\varphi(S)|$.

\begin{theorem}\label{thm:extension-number}
Let $\varphi$ be a proper coloring of $G[X]$.
Then the minimum number of colors needed by a proper coloring of $G$
that extends $\varphi$ is exactly
\[
K_{\varphi}
=
\max_{S\subseteq X}\bigl(|\varphi(S)|+m(S)\bigr).
\]
\end{theorem}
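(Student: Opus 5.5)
We prove matching lower and upper bounds. There is one ambiguity to settle first. ``Number of colors'' could mean the size of the palette, or the size of the image $\varphi(X)\cup c(V\setminus X)$. We read it as the number of distinct colors used by the extension, and we are free to reuse the colors already in $\varphi(X)$.

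\emph{Lower bound.} Let $c$ be any proper coloring of $G$ extending $\varphi$, and fix $S\subseteq X$ with $m(S)>0$. Take a twin-clique $C$ with $N_X(C)=S$ and $|C|=m(S)$.
\begin{itemize}
\item Since $C$ is a clique, $c$ uses $|C|$ distinct colors on $C$.
\item Every vertex of $C$ is adjacent to every vertex of $S$, so none of these colors lies in $\varphi(S)$.
\end{itemize}
Hence $c$ uses at least $|\varphi(S)|+m(S)$ colors. If $m(S)=0$, the bound $|\varphi(S)|$ holds trivially because $c$ extends $\varphi$. Taking the maximum over $S$ gives that $c$ uses at least $K_\varphi$ colors.

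\emph{Upper bound.} Let $K:=K_\varphi$. Choosing $S=X$ shows $K\ge|\varphi(X)|$, so we may take a palette $P$ of size $K$ that contains $\varphi(X)$. Color each twin-clique independently; this is possible because distinct twin-cliques are nonadjacent components of $G-X$. A twin-clique $C$ with $N_X(C)=S$ must avoid only the colors $\varphi(S)$, since its vertices have no neighbors in $X\setminus S$ and none outside $C\cup X$. The set $P\setminus\varphi(S)$ has size $K-|\varphi(S)|\ge m(S)\ge|C|$. We therefore assign distinct colors from $P\setminus\varphi(S)$ to the vertices of $C$. The resulting coloring is proper:
\begin{itemize}
\item edges inside $X$ are handled by $\varphi$;
\item edges inside $C$ get distinct colors;
\item edges between $C$ and $S$ avoid $\varphi(S)$.
\end{itemize}
It uses at most $K$ colors, so together with the lower bound the minimum is exactly $K_\varphi$.

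\emph{Where care is needed.} The only step that needs attention is the structural fact that a twin-clique $C$ sees exactly $N_X(C)$ outside itself, with no edges to other twin-cliques. This is established in the preliminaries. With it, the extension problem decouples completely across twin-cliques, which is the crux of the argument. The formula $\chi(G)=\min_\varphi K_\varphi$ then follows, because every proper coloring of $G$ restricts to a proper coloring of $G[X]$.
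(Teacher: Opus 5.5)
Your proposal is correct and follows essentially the same route as the paper: the lower bound takes a maximum twin-clique with $N_X(C)=S$, whose colors must avoid $\varphi(S)$. The upper bound colors each twin-clique independently from $[K]\setminus\varphi(S)$. Your choice of a palette $P\supseteq\varphi(X)$ of size $K$ is a slightly cleaner substitute for the paper's ``renaming colors'' step, since it leaves $\varphi$ itself untouched.
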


\begin{proof}
We first prove the lower bound.
Let $c$ be any proper coloring of $G$ that extends $\varphi$.

Fix a subset $S\subseteq X$.
The coloring $c$ uses all colors from $\varphi(S)$ on the vertices of $S$,
and therefore uses at least $|\varphi(S)|$ colors in total.

If $m(S)=0$, then this already gives
$|c(V(G))|\geq |\varphi(S)| = |\varphi(S)|+m(S)$.

Assume now that $m(S)>0$.
Let $C$ be a twin-clique with $N_X(C)=S$ and $|C|=m(S)$.
Every vertex of $C$ is adjacent to every vertex of $S$, so no color which appears on $S$
may be used on any vertex of $C$.
Moreover, $C$ is a clique of size $m(S)$, so $c$ uses exactly $m(S)$ pairwise distinct colors on $C$.
Consequently,
\[
|c(V(G))|
\geq
|\varphi(S)| + m(S).
\]

Since $S\subseteq X$ was arbitrary, we conclude that every proper extension $c$ of $\varphi$
satisfies
\[
|c(V(G))|
\geq
\max_{S\subseteq X}\bigl(|\varphi(S)|+m(S)\bigr)
=
K_{\varphi}.
\]

We now prove the matching upper bound.
Let
\[
K := K_{\varphi}
=
\max_{S\subseteq X}\bigl(|\varphi(S)|+m(S)\bigr).
\]
Because $K\geq |\varphi(X)|$, after renaming colors if necessary we may assume that
$\varphi(X)\subseteq [K]$.

We construct a proper $K$-coloring of $G$ extending $\varphi$.
Keep the colors of the vertices of $X$ fixed according to $\varphi$.
Now let $C$ be any twin-clique with neighborhood $S:=N_X(C)$ and size $s:=|C|$.
Since
$K-|\varphi(S)| \geq m(S)\geq s$,
there are at least $s$ colors in the set $[K]\setminus \varphi(S)$.
Choose any $s$ such colors and assign them injectively to the vertices of $C$.

This coloring is proper on $C$ because $C$ is a clique and its vertices receive distinct colors.
It is also proper between $C$ and $S$, because no color used on $C$ belongs to $\varphi(S)$.
There are no edges from $C$ to vertices of $X\setminus S$.
Since different twin-cliques are pairwise anticomplete, the color choices for distinct twin-cliques
may be made independently.

Thus we obtain a proper coloring of $G$ extending $\varphi$ and using only the colors in $[K]$.
Therefore the minimum number of colors needed by a proper extension of $\varphi$ is at most $K_{\varphi}$.

Combining the two bounds yields the claim. \qedbox
\end{proof}

As an immediate consequence, we obtain an exact formula for the chromatic number.

\begin{corollary}\label{cor:chi-formula}
For every connected graph $G$ with fixed twin cover $X$,
\[
\chi(G)
=
\min_{\varphi \text{ proper on } G[X]} K_{\varphi}
=
\min_{\varphi \text{ proper on } G[X]}
\max_{S\subseteq X}\bigl(|\varphi(S)|+m(S)\bigr).
\]
\end{corollary}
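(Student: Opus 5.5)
The plan is to derive Corollary~\ref{cor:chi-formula} directly from Theorem~\ref{thm:extension-number} by proving two inequalities. The only extra ingredient is the observation that every proper coloring of $G$ restricts to a proper coloring of $G[X]$, and any such coloring is extended by the original coloring of $G$.

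First, for $\chi(G)\le \min_\varphi K_\varphi$: fix any proper coloring $\varphi$ of $G[X]$. Theorem~\ref{thm:extension-number} gives a proper coloring of $G$ extending $\varphi$ that uses exactly $K_\varphi$ colors. Hence $\chi(G)\le K_\varphi$. Since $\varphi$ was arbitrary, we may take the minimum over $\varphi$.

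Second, for $\chi(G)\ge \min_\varphi K_\varphi$: take a proper $\chi(G)$-coloring $c$ of $G$ and let $\varphi:=c|_X$. Then $\varphi$ is proper on $G[X]$ because $G[X]$ is a subgraph of $G$. The coloring $c$ is a proper extension of $\varphi$, so the lower-bound half of Theorem~\ref{thm:extension-number} gives $\chi(G)\ge |c(V(G))|\ge K_\varphi\ge\min_{\varphi'}K_{\varphi'}$.

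Two small points need checking. The minimum is over a nonempty family, since $G[X]$ always admits a proper coloring. Also, $K_\varphi$ depends only on the partition of $X$ into color classes, not on the color names, so the minimum is attained over finitely many relevant colorings and is well defined. If $X=\emptyset$, then $G$ is a single twin-clique, the only $S$ is $\emptyset$, and $K_\varphi=m(\emptyset)=|V(G)|=\chi(G)$, which is consistent with the formula. I do not expect a real obstacle; the only care needed is to count colors as $|c(V(G))|$, the quantity the lower bound in Theorem~\ref{thm:extension-number} controls, rather than as a palette bound.
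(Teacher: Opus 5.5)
Your proof is correct and follows the paper's argument: restricting an optimal proper coloring of $G$ to $X$ and applying the lower-bound half of Theorem~\ref{thm:extension-number} gives $\chi(G)\ge\min_\varphi K_\varphi$, and the extension guaranteed by that theorem gives $\chi(G)\le K_\varphi$ for every proper $\varphi$ on $G[X]$. Your side remarks (nonemptiness of the family, the case $X=\emptyset$, and counting $|c(V(G))|$) are accurate but not needed beyond what the paper already uses.
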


\begin{proof}
Every proper coloring of $G$ restricts to a proper coloring $\varphi$ of $G[X]$.
By Theorem~\ref{thm:extension-number}, every proper extension of $\varphi$
uses at least $K_{\varphi}$ colors, and there exists a proper extension
using exactly $K_{\varphi}$ colors.
Taking the minimum over all proper colorings $\varphi$ of $G[X]$ yields the formula. \qedbox
\end{proof}

\begin{remark}
Corollary~\ref{cor:svcfc-between-chi-and-chi-plus-t} and Corollary~\ref{cor:chi-formula}
show that, once a twin cover $X$ is fixed, the remaining difficulty in determining
$\operatorname{svcfc}(G)$ lies in the bounded additive gap above $\chi(G)$,
which is at most $|X|=t$.
\end{remark}

\bibliographystyle{splncs04}
\bibliography{twin}

\appendix
\section*{Appendix}

\subsection*{Proof of Lemma~\ref{lem:reduction-safe}}
\begin{lemma}
The reduction rule is safe: if an application of the rule transforms a valid
instance $(G,k,X)$ into $(G-C,k,X)$, then $(G,k,X)$ is a yes-instance if and
only if $(G-C,k,X)$ is a yes-instance.
\end{lemma}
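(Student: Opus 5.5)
The plan is to prove the two directions separately, after first showing that deleting $C$ does not change distances among the remaining vertices. Fix $D^\ast\in\mathcal{C}_{S,s}\setminus\{C\}$; it exists because $|\mathcal{C}_{S,s}|>(t+1)\binom{k}{s}\ge 1$. We must also check that $G-C$ is connected and that $X$ is still a twin cover of it.

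\emph{Distances.} Let $u,v\in V(G)\setminus C$ and let $P$ be a shortest $u$-$v$ path in $G$ that uses a vertex of $C$. Both endpoints lie outside $C$, so by Lemma~\ref{lem:shortest-path-twinclique} the path $P$ meets $C$ in exactly one vertex $w$, which is an internal vertex. The neighbours of $w$ on $P$ lie in $S$, because $w$ has no neighbours outside $C\cup S$ and only one vertex of $C$ is on $P$. The same lemma gives $|V(P)\cap D^\ast|\le 1$. If $P$ avoids $D^\ast$, replace $w$ by any vertex of $D^\ast$. If $P$ contains some $w'\in D^\ast$, then $w$ and $w'$ have the same open neighbourhood $S\cup(\text{rest of their clique})$ restricted to $P$, namely their path-neighbours in $S$. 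This makes them at distance $2$ via $S$, so a short argument is needed here. The cleanest route is to pick a vertex of $D^\ast$ not on $P$ when $s\ge 2$. When $s=1$ and $P$ contains both $w$ and $w'$, we have $w,w'$ false twins with identical neighbourhoods $S$ and $d(w,w')=2$. A shortest path cannot visit two vertices whose predecessor and successor are both in $S$ in a way that beats shortcutting: going $a\,w\,b\cdots a'\,w'\,b'$, one may instead use $a\,w'\,b'$ and shorten the path, a contradiction. So this case cannot occur. This proves $d_{G-C}=d_G$ on $V(G)\setminus C$, and connectivity of $G-C$ follows.

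($\Leftarrow$) Take a strong CFVC $k$-colouring $c'$ of $G-C$. Fix a bijection $\beta:C\to D^\ast$ and set $c(w)=c'(\beta(w))$ for $w\in C$. For pairs $u,v\notin C$, the conflict-free shortest path supplied by $c'$ is still shortest in $G$ by the distance claim. For a pair with $u\in C$ and $v\notin C\cup\{\beta(u)\}$, the automorphism $\pi$ of Lemma~\ref{lem:swap-twincliques} extending $\beta$ maps $u$ to $\beta(u)$. Take a conflict-free shortest path $Q$ in $G-C$ from $\beta(u)$ to $\pi(v)$ (if $v\in D^\ast$ then $\pi(v)\in C$, so that sub-case needs separate handling). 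Transport $Q$ by $\pi$. The colouring $c$ is $\pi$-invariant on $C\cup D^\ast$ and fixed elsewhere, so the colour pattern is preserved. The remaining pairs are: both endpoints in $C$, which are adjacent and handled by properness; $u\in C$ and $v=\beta(u)$ or $v\in D^\ast$; these are at distance $2$ via $S$, or $C$ sees nothing if $S=\emptyset$, impossible by connectivity. Each such pair has a path $u\,x\,v$ with $x\in S$, and it is conflict-free because $c(u)\ne c(x)$ and $c(x)\ne c(v)$. So it suffices to check that $c$ is proper, which holds since $\beta$ copies colours from a clique with the same neighbourhood.

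($\Rightarrow$) This is the main obstacle. Let $c$ be a strong CFVC $k$-colouring of $G$ and restrict it to $G-C$. Each clique in $\mathcal{C}_{S,s}\setminus\{C\}$ receives an $s$-subset of $[k]$ as its colour set, and there are at least $(t+1)\binom{k}{s}$ such cliques. By pigeonhole, some colour class contains $t+1$ cliques $D_1,\dots,D_{t+1}$. Now take $u,v\notin C$ and a conflict-free shortest $G$-path $P$. If $P$ hits $C$, it does so at a single internal vertex $w$ (Lemma~\ref{lem:shortest-path-twinclique}). Then $P$ meets at most $t$ twin-cliques other than $C$, since it meets at most $t+1$ in total, so some $D_j$ is disjoint from $P$. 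Replace $w$ by the vertex of $D_j$ of colour $c(w)$. If $c(w)$ does not occur on $D_j$ (because $C$ has a different colour set), then instead pick $C$'s colour set. To arrange this, I will apply pigeonhole to all $>(t+1)\binom{k}{s}$ cliques including $C$, obtaining $t+2$ cliques with a common set. If $C$ is among them, $t+1$ others remain; otherwise the deleted $C$ is irrelevant and a different argument is needed. I therefore expect to strengthen the choice by noting that the rule deletes an \emph{arbitrary} clique, so we should handle every colour set. The safest fix is to replace $w$ by a vertex of any disjoint clique $D$ of the same type whose colour at the swapped vertex agrees. This requires $\ge t+1$ remaining cliques sharing $C$'s colour set, which is exactly what I will verify carefully, possibly via the automorphism of Lemma~\ref{lem:swap-twincliques} and a recolouring argument.
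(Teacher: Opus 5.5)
\textbf{There is a genuine gap: the $(\Rightarrow)$ direction is not finished.} Your distance claim and your $(\Leftarrow)$ direction are sound and essentially match the paper. For the distance claim, the paper avoids your case analysis on $V(P)\cap D^\ast$ by substituting any $d^\ast\in D^\ast$ for $w$ and shortening the resulting walk. For $(\Leftarrow)$, it substitutes $c$ for $\beta(c)$ on a $\beta(c)$--$v$ path instead of transporting by $\pi$; the two are equivalent.

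In $(\Rightarrow)$, your first pigeonhole over $\mathcal{C}_{S,s}\setminus\{C\}$ yields $t+1$ cliques sharing \emph{some} colour set, but nothing makes that set equal $\operatorname{col}_f(C)$. The property you then promise to verify, namely that at least $t+1$ cliques other than $C$ carry $C$'s colour set under the given colouring $f$, is false in general. $C$ may be the only clique of its type with its colour set. So no argument about the restriction of the original $f$ can close this; you have to change the colouring, or equivalently the clique you delete.

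The missing step is the one you only gesture at.
\begin{itemize}
\item Pigeonhole over all of $\mathcal{C}_{S,s}$, including $C$, to obtain an $s$-set $A\subseteq[k]$ with $\operatorname{col}_f(Q)=A$ for at least $t+2$ cliques $Q$.
\item If $\operatorname{col}_f(C)\neq A$, pick such a $D$ and let $\pi$ be the automorphism of Lemma~\ref{lem:swap-twincliques} exchanging $C$ and $D$.
\item $f\circ\pi$ is again a strong CFVC $k$-colouring of $G$. Indeed, for a conflict-free shortest $\pi(u)$--$\pi(v)$ path $P$ under $f$, the path $\pi^{-1}(P)$ is a shortest $u$--$v$ path, and $f\circ\pi$ colours it exactly as $f$ colours $P$.
\item Under $f\circ\pi$, $C$ receives $A$, $D$ receives $\operatorname{col}_f(C)\neq A$, and all other cliques are unchanged. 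Hence at least $t+1$ cliques $D_1,\dots,D_{t+1}\neq C$ still carry $A$.
\end{itemize}
Equivalently, $\pi$ restricts to an isomorphism $G-C\cong G-D$, so you may delete $D$ instead.

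With this normalization your rerouting goes through. The endpoints of $P$ are not in a common twin-clique, so $P$ meets $C$ in a single internal vertex $w$ whose path-neighbours lie in $S$. $P$ meets at most $t$ other twin-cliques, so some $D_j$ avoids $P$. Replacing $w$ by the vertex of $D_j$ with colour $f(w)$ gives a simple path in $G-C$ with the same length and colour multiset. By your distance claim, that path is shortest in $G-C$.
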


\begin{proof}
Fix $S\subseteq X$, $s\in [k]$, and a clique $C\in \mathcal{C}_{S,s}$
to which the rule is applied.
We show that $(G,k,X)$ is a yes-instance if and only if $(G-C,k,X)$ is a yes-instance.

Since the rule applies, there exists a clique
$D^\ast\in \mathcal{C}_{S,s}\setminus\{C\}$.
As $G$ is connected and $C\neq D^\ast$ are distinct connected components of $G-X$,
we necessarily have $S\neq \emptyset$.

We first record that
\[
d_{G-C}(u,v)=d_G(u,v)
\qquad\text{for all }u,v\in V(G)\setminus C.
\]
Indeed, since $G-C$ is an induced subgraph of $G$, we always have
$d_G(u,v)\le d_{G-C}(u,v)$.
For the reverse inequality, let $P$ be a shortest $u$-$v$ path in $G$.
If $P$ avoids $C$, then $P$ is a $u$-$v$ path in $G-C$ of length $d_G(u,v)$.
Otherwise, since both endpoints of $P$ lie outside $C$,
Lemma~\ref{lem:shortest-path-twinclique} implies that $P$ contains exactly one
vertex $c\in C$.
Let $x$ and $y$ be the neighbors of $c$ on $P$; then $x,y\in S$, because the
neighbors of $c$ outside $C$ are precisely the vertices of $S$.
Fix any vertex $d^\ast\in D^\ast$.
Since $D^\ast$ has neighborhood $S$ in $X$, both $x$ and $y$ are adjacent to
$d^\ast$.
Hence replacing $c$ by $d^\ast$ yields a $u$-$v$ walk in $G-C$ of length
$d_G(u,v)$; deleting repeated vertices gives a $u$-$v$ path in $G-C$ of length
at most $d_G(u,v)$.
Therefore $d_{G-C}(u,v)\le d_G(u,v)$, and equality follows.

In particular, $G-C$ is connected, so $(G-C,k,X)$ is a valid instance of
\textsc{Annotated Strong CFVC Number}. Moreover, since $C\subseteq V(G)\setminus X$, the set $X$ remains a twin cover of $G-C$:
every edge of $(G-C)-X$ is also an edge of $G-X$, hence a twin edge in $G$, and the
twin-edge property is preserved under vertex deletion.

\smallskip
\noindent
($\Leftarrow$)
Assume that $G-C$ admits a strong CFVC $k$-coloring $f$.
Fix a bijection $\beta\colon C\to D^\ast$, and extend $f$ to a coloring of $G$
by setting
\[
f(c):=f(\beta(c))
\qquad \text{for all } c\in C.
\]
Since $D^\ast$ is properly colored and $|C|=|D^\ast|=s$, the vertices of $C$
receive pairwise distinct colors.
Moreover, if $c\in C$ and $x\in S$, then $\beta(c)\in D^\ast$ is adjacent to $x$
in $G-C$, so
\[
f(c)=f(\beta(c))\neq f(x).
\]
Since the vertices of $C$ have no neighbors outside $C\cup S$, the extended
coloring is proper on all edges incident with $C$, and hence proper on all of $G$.

For any pair of vertices $u,v\in V(G)\setminus C$, the coloring $f$ already
provides a conflict-free shortest $u$-$v$ path in $G-C$; by the distance
equality proved above, that path is also shortest in $G$.
Thus it remains to verify the strong CFVC property only for pairs involving a
vertex of $C$.

Let $c\in C$.

If both endpoints lie in $C$, then they are adjacent and have different colors,
so their connecting edge is a conflict-free shortest path.

Now let $v\in D^\ast$.
Choose any $x\in S$.
Then $c$ and $v$ are nonadjacent because distinct twin-cliques are anticomplete,
whereas $c-x-v$ is a path of length $2$.
Thus $c-x-v$ is a shortest path.
Because $x$ is adjacent to both $c$ and $v$, properness gives
$f(x)\neq f(c)$ and $f(x)\neq f(v)$.
Therefore the color $f(x)$ appears exactly once on $c-x-v$, so this path is
conflict-free.

Finally, let $v\in V(G)\setminus (C\cup D^\ast)$, and let $d:=\beta(c)$.
Because $f$ is a strong CFVC coloring of $G-C$, there exists a conflict-free
shortest $d$-$v$ path $P$ in $G-C$.
Applying Lemma~\ref{lem:shortest-path-twinclique} in the graph $G-C$,
the path $P$ contains no vertex of $D^\ast$ other than its endpoint $d$.
Replacing $d$ by $c$ yields a path $P'$ in $G$ with the same length and the same
multiset of colors as $P$, because $c$ and $d$ have the same neighborhood in
$V(G)\setminus (C\cup D^\ast)$.
Hence $P'$ is conflict-free.

By our choice of $P$,
\[
|E(P')|=|E(P)|=d_{G-C}(d,v).
\]
By the distance equality proved above,
\[
d_{G-C}(d,v)=d_G(d,v).
\]
Moreover, by Lemma~\ref{lem:swap-twincliques}, the bijection $\beta$ extends to
an automorphism of $G$ fixing every vertex outside $C\cup D^\ast$.
Since $v\notin C\cup D^\ast$, this implies
\[
d_G(d,v)=d_G(c,v).
\]
Therefore
\[
|E(P')|=d_G(c,v),
\]
and so $P'$ is a conflict-free shortest $c$-$v$ path in $G$.

Therefore the extended coloring is a strong CFVC $k$-coloring of $G$.

\smallskip
\noindent
($\Rightarrow$)
Assume that $G$ admits a strong CFVC $k$-coloring $f$.
For every clique $Q\in \mathcal{C}_{S,s}$, the set $\operatorname{col}_f(Q)$ is
an $s$-element subset of $[k]$.
Hence there are at most $\binom{k}{s}$ possibilities for $\operatorname{col}_f(Q)$.

Since $|\mathcal{C}_{S,s}|>(t+1)\binom{k}{s}$,
there exists an $s$-element set $A\subseteq [k]$ which occurs on at least $t+2$
cliques in $\mathcal{C}_{S,s}$.
By Lemma~\ref{lem:swap-twincliques}, cliques in $\mathcal{C}_{S,s}$ are
interchangeable under automorphisms.
Composing $f$ with such an automorphism if necessary, we may assume that
$\operatorname{col}_f(C)=A$.
Let $D_1,\dots,D_{t+1}$ be $t+1$ further cliques in $\mathcal{C}_{S,s}$ with
$\operatorname{col}_f(D_i)=A$ for all $i\in [t+1]$.

We claim that the restriction of $f$ to $G-C$ is a strong CFVC $k$-coloring of
$G-C$.
Let $u,v\in V(G)\setminus C$.
Since $f$ is strong on $G$, there exists a conflict-free shortest $u$-$v$ path
$P$ in $G$.

If $P$ avoids $C$, then $P$ is a conflict-free $u$-$v$ path in $G-C$.
By the distance equality proved above,
\[
d_{G-C}(u,v)=d_G(u,v)=|E(P)|,
\]
so $P$ is also shortest in $G-C$.

Assume now that $P$ meets $C$.
Since $u,v\notin C$, Lemma~\ref{lem:shortest-path-twinclique} implies that $P$
contains exactly one vertex $c\in C$.
Since \(P\) meets \(C\) and \(u, v \notin C\), the endpoints \(u\) and \(v\) cannot belong to the same twin-clique; otherwise they would be adjacent and a shortest \(u\text{--}v\) path would have length \(1\). The same lemma also shows that $P$ meets at most $t+1$ twin-cliques in total.
Hence, besides $C$, the path $P$ meets at most $t$ other twin-cliques.
Therefore at least one clique among $D_1,\dots,D_{t+1}$, say $D_j$, is disjoint
from $P$.

Since $\operatorname{col}_f(C)=\operatorname{col}_f(D_j)=A$ and both cliques are
properly colored, there exists a unique vertex $d\in D_j$ with $f(d)=f(c)$.
Replace $c$ by $d$ on the path $P$.
The resulting walk $P'$ is simple because $D_j$ is disjoint from $P$.
Moreover, the neighbors of $c$ on $P$ belong to $X$, and $C$ and $D_j$ have the
same neighborhood in $X$, so $P'$ is in fact a path in $G-C$.
It has the same length and the same multiset of colors as $P$, and is therefore
conflict-free.

Since
\[
|E(P')|=|E(P)|=d_G(u,v)=d_{G-C}(u,v),
\]
the path $P'$ is a shortest $u$-$v$ path in $G-C$.

Thus every pair of vertices of $G-C$ is joined by a conflict-free shortest path
under the restriction of $f$.
Therefore $G-C$ is a yes-instance. \qedbox
\end{proof}

\end{document}